\theoremstyle{remark}
\newtheorem{theorem}{Theorem}
\newtheorem{lemma}{Lemma}
\newtheorem{proposition}{Proposition}
\newtheorem{remark}{Remark}
\begin{document}
\title{Age of Information of CSMA/CA Based \\Wireless Networks}
\author{\IEEEauthorblockN{Liang Li\textsuperscript{1}, {Yunquan Dong}\textsuperscript{1,2},~\IEEEmembership{Member,~IEEE}, Chengsheng Pan\textsuperscript{1}
and {Pingyi Fan}\textsuperscript{3},~\IEEEmembership{Senior Member,~IEEE}}

\IEEEauthorblockA{\textsuperscript{1}School of Electronic \& Information Engineering, Nanjing University of Information Science \& Technology, Nanjing, China}
\IEEEauthorblockA{\textsuperscript{2}National Mobile Communications Research Laboratory, Southeast University, Nanjing, China}
\IEEEauthorblockA{\textsuperscript{3}Beijing National Research Center for Information Science and Technology (BNRist), Beijing, China \\Department of Electronic Engineering, Tsinghua University, Beijing, China}
\IEEEauthorblockA{\{liangli, yunquandong, 003150\}@nuist.edu.cn, fpy@tsinghua.edu.cn}

\thanks{The work of Y. Dong was supported by the National Natural Science Foundation of China (NSFC) under Grant 62071237 and 61931004, the open research fund of National Mobile Communications Research Laboratory, Southeast University, under grant No.2020D09.
        The work of P. Fan was supported by the National Key R\&D Program of  China under grant No.2021YFA1000504.
        }
        }

\maketitle

\begin{abstract}
 We consider a wireless network where $N$ nodes compete for a shared channel over the CSMA/CA protocol to deliver observed updates to a common remote monitor. For this network, we rate the information freshness of the CSMA/CA based network using the age of information (AoI). Different from previous work, the network we consider is unsaturated. To theoretically analyze the transmission behavior of the CSMA/CA based network, we, therefore, develop an equivalent and tractable Markov transmission model. Based on this newly developed model, the transmission probability, collision probability and average AoI of the network are obtained. Our numerical results show that as the packet rate and the number of nodes increase, both the transmission probability and collision probability are increasing; the average AoI first decreases and then increases as the packet rate increases and increases with the number of nodes.
\end{abstract}

\begin{IEEEkeywords}
Age of Information, CSMA/CA protocol, equivalent transmission model, timely status updates.
\end{IEEEkeywords}

\IEEEpeerreviewmaketitle

\section{Introduction}
With the rapidly developing of Internet-of-Things (IoT) technology and the continuous advancement of Industry 4.0 systems, there are strict timeliness requirements for the increasing number of real-applications.
Traditional metrics like throughput and latency are no longer applicable, and thus a novel timeliness metric named the age of information (AoI) was first put forward in \cite{S.KaulproposeAOI2011}. Since then, the AoI, which is described as the difference between the current epoch and the generation epoch of the newest reception packet \cite{S.Kaul.definition.MM12012}, has been extensively researched. First, there are some works that have concentrated on studying the average AoI in several queueing models like M/M/1, M/D/1, and D/M/1 \cite{S.Kaul.definition.MM12012}. Moreover, AoI under various strategies like the LCFS strategy, the LGFS strategy, and so on \cite{S.K.Kaul.LCFS2012}\cite{A.M.Bedewy.LGFS.2016} were also investigated.

Due to the constrained spectrum resources of wireless networks, congestions are caused when massive wireless IoT devices access a common channel simultaneously. This would result in a remote monitor not receiving valuable update information in time and wasting limited channel resources. Thus, it is crucial that massive wireless IoT devices using suitable access protocols over a shared channel, in a timely and efficient manner, send their updating information \cite{qwang,swan} . In addition, in most practical networks, using fixed multiple access strategies, such as TDMA and FDMA, cannot meet the strict timeliness requirements of real-time applications since IoT devices may work intermittently. Therefore, random multiple access strategies, e.g., CSMA/CA and so on, are more practical for IoT networks and Industry 4.0 systems \cite{xchen,zyao}.

Traditional performance metric like throughput has been investigated in the networks employing CSMA/CA, e.g., \cite{G.Bianchi2000} and \cite{K.Duffy.non-saturated2005}. Specifically, G.Bianchi established a Markov chain transmission model for the CSMA/CA based saturated network (i.e., all nodes, in the network, always have packets.) to facilitate the analysis of network throughput \cite{G.Bianchi2000}.
Afterwards, Duffy et al. extended the Markov chain proposed by G. Bianchi to the unsaturated network and the results showed that the newly developed model can be extensively used in traffic areas \cite{K.Duffy.non-saturated2005}.

The novel performance metric AoI of the networks based on the CSMA/CA was also studied. For instance, the average AoI was first studied in \cite{S.KaulproposeAOI2011} for the CSMA/CA based a vehicular network through simulations. Motivated by this work, the authors analyzed the AoI of the network employing the CSMA/CA protocol in the case of no collisions occurring, using the stochastic hybrid systems method \cite{A.Maatouk2019}. The average AoI of a network based on CSMA/CA with large collision probabilities was also investigated in \cite{M.Moltafet2020}.

\begin{figure}[!t] %h表示这儿；t表示顶部；b表示底部；p表示本页
	\centering %居中
	\includegraphics[width=3.5in]{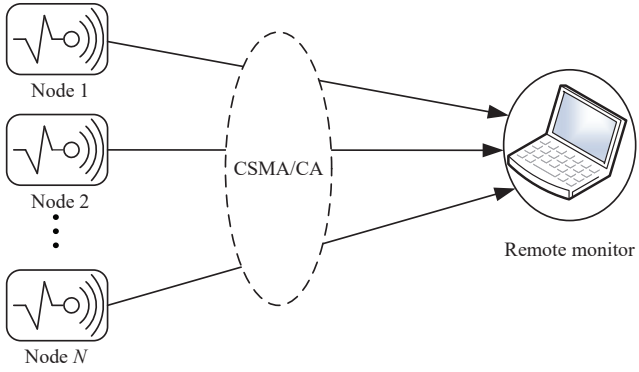}%图片大小及加载的图片名称		
	\caption{The CSMA/CA based  wireless network.}%图片标题
	\label{system_model}%标注该图片，用于在文章内引用
\end{figure}

In this paper, we shall rate the timeliness of an unsaturated wireless network via the AoI. To be specific, all nodes share a common channel by the CSMA/CA protocol to deliver their observed updates to a remote monitor and do not always have packets arriving, as shown in Fig. \ref{system_model}.

With respect to this wireless network employing the CSMA/CA, we shall build a tractable Markov transmission model to analyze the transfer process. Moreover, the average AoI of the network based on CSMA/CA is also explicitly presented in a closed form. Our numerical results show that both the transmission probability and collision probability increase with the population of nodes and packet rate; as packet rate increases, the average AoI is decreasing first and then increasing; the average AoI is increasing with the population of nodes.

We summarize the remaining organization of this paper as follows. In Section \ref{organize-system-model}, the CSMA/CA based wireless model is presented. As shown in Section \ref{organize-system-analysis}, we develop a universal Markov transmission model for the network using CSMA/CA to analyze the transmission behavior and derive the service rate of the network. We also explicitly give the average AoI in a closed form. Finally, we show numerical results and conclusions in Section \ref{organize-simulation-results} and \ref{organzie-conclusion}, respectively.

\section{System Model} \label{organize-system-model}
As shown in Fig. \ref{system_model}, we focus on a wireless network composed of $N$ independent source nodes of identical characteristics and a remote public monitor, where every node generates packets following a Bernoulli process at rate $p$. In particular, all the nodes compete for a shared channel over the CSMA/CA protocol. It is assumed that time is slotted and only one slot is required for the transmission of each packet.

In the CSMA/CA protocol, every node must wait for a back-off time $t$ slots before trying to send a packet. The back-off counter records its corresponding back-off time $t$. Note that $t$ is chosen from zero to ${{w}_{s}}$, in which we denote ${{w}_{s}}={{2}^{s}}{{w}_{0}}$ as the competition window, $s$ is the number of unsuccessful back-off stage and can be infinite (i.e., $s=0,1,\cdots$). $t$ would be reduced by one in each slot in case of no collision occurring. Otherwise, it keeps unchanged. A node will try to send its head-of-line packet if the back-off counter is zero. If a transmission is successful, the node starts the back-off process of a new packet by resetting its back-off stage $s$ to zero and a random back-off counter. If a collision occurs, the node enters the next back-off stage by doubling the contention window and a random back-off counters.

We evaluate the timeliness of the CAMA/CA based network via the AoI. The AoI is the difference between the current slot $m$ and the generation slot ${{U}_{n}}\left( m \right)$ of the newest successful reception packet. Thus, the AoI is mathematically represented as
\begin{equation}
   {{\Delta }_{n}}\left( m \right)=m-{{U}_{n}}\left( m \right).
\end{equation}
\begin{figure}[!htp]
	\centering %居中
	\includegraphics[width=3.5in]{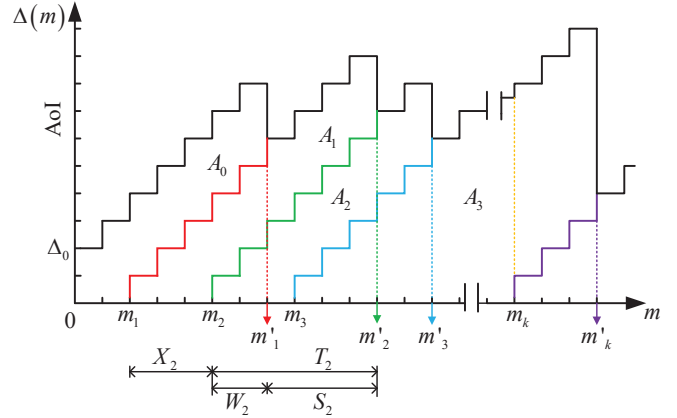}%图片大小及加载的图片名称		
	\caption{A sample paths of AoI.}%图片标题
	\label{age-sample}%标注该图片，用于在文章内引用
\end{figure}

Fig. \ref{age-sample} describes a sample path of the AoI. We denote the arrival of packet $k$ as $m_{k}$, denote the reception of the packet as $m_{k}'$, denote the arrival time of two neighboring packets as inter-arrival time $X_{k}$, denote the waiting time and service time of packet $k$, respectively, as $W_{k}$ and $S_{k}$. From Fig. \ref{age-sample}, we can represent the system time of packet $k$ as $T_{k}=W_{k}+S_{k}$.

\section{Transmission Behavior and Average AoI of CSMA/CA Based Network} \label{organize-system-analysis}
In this section, we first build an equivalent and universal Markov transmission model for this network. Second, according to this newly developed Markov model, the transmission probability, collision probability, and service rate are obtained. After that, we explicitly give the average AoI of the network in a closed form.
\subsection{Equivalent and Universal Transmission Model}
In this subsection, to facilitate the analysis of the transmission behavior of the CSMA/CA based network, based on the collision probability ${{p}_\text {cl}}$ and transmission probability ${{p}_\text {tx}}$, we develop an equivalent and universal Markov transmission model. In particular, ${{p}_\text {cl}}$ is probability that two or more nodes deliver their packets in the same slot; ${{p}_\text {tx}}$ is probability that a node attempts to transmit a packet when its corresponding back-off counter is zero. Unlike \cite{G.Bianchi2000}, in this paper, we shall study an unsaturated network, i.e., the buffer of all nodes is not always non-empty. To this end, we model a three-dimensional process $(s, t, c)$ as a Markov chain with the number $s$ of unsuccessful back-offs, the back-off counter $t$ of a node, and the number $c$ of packets in the cache as the state of the node. In particular, it is said that a node is in the idle state $(-1,-1,0)$ if its buffer is empty, i.e., $c=0$.

\begin{figure}
\centering
\subfigure[Transition diagram for each node buffer size $c = k \geq 1$.]{
\includegraphics[width=3.5in]{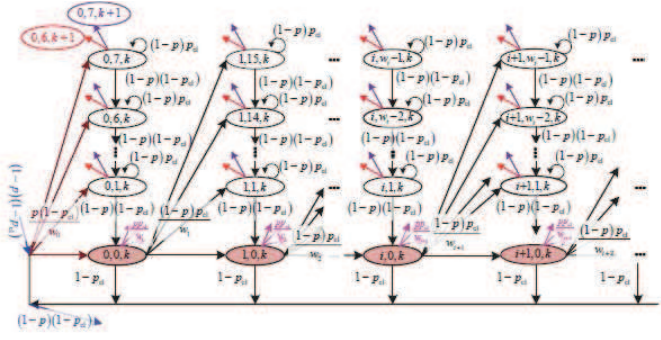}\label{Markov-a}
}
\subfigure[Transition diagram of the three-dimensional Markov chian.]{
\includegraphics[width=3.5in]{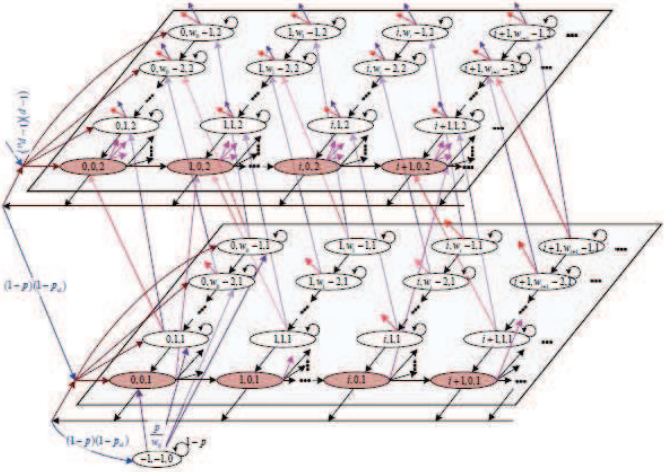}\label{Markov-b}
}
\DeclareGraphicsExtensions.
\caption{Three dimensional transition diagram for a node state $(s, t, c)$.}
\label{Markov}
\end{figure}

As shown in Fig. \ref{Markov}, we present the state transition diagram of the three-dimensional Markov chain, where Fig. \ref{Markov-a} describes the state transition diagram of the Markov chain when the buffer is non-empty (i.e., $c=k\ge 1$), and its corresponding transition probabilities are denoted as
\begin{align}\label{eq:non-empty transition probabilities}
   \left\{ \begin{aligned}
   & \Pr \left\{ i,j,k\left| i,j,\left. k \right\} \right. \right.=\left( 1-p \right){{p}_\text {cl}} \\
   & \Pr \left\{ i,j-1,k\left| i,j,\left. k \right\} \right. \right.=\left( 1-p \right)\left( 1-{{p}_\text {cl}} \right) \\
   & \Pr \left\{ 0,j,k\left| i,0,\left. k \right\} \right. \right.=\frac{p\left( 1-{{p}_\text {cl}} \right)}{{{w}_{0}}} \\
   & \Pr \left\{ i+1,j,k\left| i,0,\left. k \right\} \right. \right.=\frac{\left( 1-p \right){{p}_\text {cl}}}{{{w}_{i+1}}},
\end{aligned} \right.
\end{align}
in which $p$ is the packet rate of each node, $j\in (1,{{w}_{i}}-1)$, $i=0,1,\cdots$.

In \eqref{eq:non-empty transition probabilities}, the first and second equations describe the cases that when no packet is generated, the back-off counter will be frozen if a collision is sensed and will be reduced by one if no collision occurs. The remaining equations account for the case where the back-off counter returns to zero. If a node delivers its packet without collisions (a packet is removed) and a new packet arrives in the slot (this new packet is pushed into the buffer), the node will start a new back-off process in the same layer. If a collision is detected and no packet arrives, the node will enter the next back-off stage with a random back-off counter.

Fig. \ref{Markov-b} shows the state transition diagram when the node buffer changes and its corresponding transition probabilities are given by
\begin{align}\label{eq:transition probabilities}
   \left\{
   \begin{aligned}
   & \Pr \left\{ -1,-1,0\left| -1,-1,\left. 0 \right\} \right. \right.=1-p \\
   & \Pr \left\{ 0,j,1\left| -1,-1,\left. 0 \right\} \right. \right.=\frac{p}{{{w}_{0}}} \\
   & \Pr \left\{ -1,-1,0\left| i,0,\left. 1 \right\} \right. \right.=(1-p)(1-{{p}_\text {cl}}) \\
   & \Pr \left\{ 0,j,k\left| i,\left. 0,k \right\} \right. \right.=\frac{(1-p)(1-{{p}_\text {cl}})}{{{w}_{0}}} & k\ge 2& \\
   & \Pr \left\{ i+1,j,k+1\left| i,\left. 0,k \right\} \right. \right.=\frac{p{{p}_\text {cl}}}{{{w}_{i+1}}}  &k\ge 1& \\
   & \Pr \left\{ i,j,k+1\left| i,j,\left. k \right\} \right. \right.=p{{p}_\text {cl}} & k\ge 1& \\
   & \Pr \left\{ i,j-1,k+1\left| i,j,\left. k \right\} \right. \right.=p(1-{{p}_\text {cl}}) & k\ge 1&.
  \end{aligned}
  \right.
\end{align}

In \eqref{eq:transition probabilities}, the first equation describes the fact that a node is in the idle state. The second equation explains the situation that a node generates a packet and will transit to the zeroth back-off stage with a random back-off counter.
The third equation describes the case when a node successfully delivers its only $(k=1)$ packet and no new packet arrives. Likewise, the fourth equation accounts for the situation that a node successfully delivers its head-of-line packet $(k\ge 2)$ and no new packet arrives. In this case, the back-off process of the followed packet restarts from the next layer. The fifth equation characterizes the case when the back-off counter returns to zero and a collision is sensed as well as a new packet arrives. Thus, the state will transit to the next stage in the above layer. Finally, the back-off counter is non-zero and the node gets a new packet, as shown in the sixth and seventh equations.

We denote the stationary distribution of the Markov chain as ${{b}_{i,j,k}}=\underset{m\to \infty }{\mathop{\lim }}\,\Pr \left\{ (s,t,c)=(i,j,k) \right\}$ , ${{b}_{i,j,*}}=\sum\nolimits_{k=1}^{\infty }{{{b}_{i,j,k}}}$, and the service rate of the network using the CSMA/CA protocol as $\mu$. In particular, the stationary distribution $\boldsymbol{b}$ is given by the below proposition.
\begin{proposition}\label{proposition-the-stationary-distribution-of-the-Markovchain}
Given a packet rate $p$ and collision probability ${{p}_\text {cl}}$, we can get the $\boldsymbol{b}$
\begin{align}
{{b}_{0,0,*}}&=p, \\
{{b}_{i,0,*}}&=pp_\text {cl}^{i}, \\
{{b}_{i,j,*}}&=\frac{p\left( {{w}_{i}}-i \right)p_\text {cl}^{i}}{{{w}_{i}}\left( 1-{{p}_\text {cl}} \right)}, \\
{{b}_\text{idle}}&=1-\frac{p\left( 4p_\text {cl}^{2}-\left( {{w}_{0}}+4 \right){{p}_\text {cl}}+{{w}_{0}}+1 \right)}{2{{\left( 1-{{p}_\text {cl}} \right)}^{2}}\left( 1-2{{p}_\text {cl}} \right)},
\end{align}
in which $i\ge 1$, $0\le j\le {{w}_{i}}-1$, and ${{w}_{i}}={{2}^{i}}{{w}_{0}}$. Moreover, the idle probability is given by
\begin{align}\label{eq:csma-bidle}
{{p}_\text{idle}}=1-\frac{p\left( 4p_\text {cl}^{2}-\left( {{w}_{0}}+4 \right){{p}_\text {cl}}+{{w}_{0}}+1 \right)}{2{{\left( 1-{{p}_\text {cl}} \right)}^{2}}\left( 1-2{{p}_\text {cl}} \right)}.
\end{align}
\end{proposition}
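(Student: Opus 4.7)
The plan is to reduce the three-dimensional chain to a Bianchi-type two-dimensional one by marginalizing over the buffer coordinate $k$, and then to solve the resulting balance equations by the usual stage-and-counter recursion. I would first introduce the marginals $b_{i,j,*} = \sum_{k \geq 1} b_{i,j,k}$ and write down their balance equations by summing the transition probabilities in \eqref{eq:non-empty transition probabilities} and \eqref{eq:transition probabilities} over $k \geq 1$. The key observation is that every transition in Fig.~\ref{Markov-b} that changes the buffer level leaves the pair $(s,t)$ in exactly the state that the corresponding twin transition in Fig.~\ref{Markov-a} (with $p$ replaced by $1-p$) produces, so the packet-arrival factor cancels after the marginalization, provided one also folds the idle-state outflow $\Pr\{0,j,1 \mid -1,-1,0\} = p/w_0$ into the $i=0$ balance.

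Two structural identities then emerge. First, the geometric recursion $b_{i,0,*} = p_\text{cl}\, b_{i-1,0,*}$ holds because stage $i$ can only be entered by a collision in stage $i-1$; iterating yields $b_{i,0,*} = p_\text{cl}^{\,i} b_{0,0,*}$. Second, within each stage the counter marginals satisfy the first-order difference $b_{i,j,*} - b_{i,j+1,*} = b_{i,0,*}/[w_i(1-p_\text{cl})]$ (obtained by subtracting the freeze self-loop from the in-flow from $b_{i,j+1,*}$ and absorbing the random-counter inflow from the stage below), which telescopes to $b_{i,j,*} = (w_i - j)\, b_{i,0,*}/[w_i(1-p_\text{cl})]$ for $j \geq 1$. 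To pin down the constant $b_{0,0,*}$ without first computing the idle probability, I would invoke throughput balance: in steady state the per-node successful-transmission rate must equal the packet arrival rate, so $(1-p_\text{cl}) \sum_{i \geq 0} b_{i,0,*} = p$, which combined with the geometric series delivers $b_{0,0,*} = p$.

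The idle probability then follows from global normalization $b_\text{idle} = 1 - \sum_{i,j} b_{i,j,*}$. The inner sum over $j$ telescopes to
\begin{equation*}
\sum_{j=0}^{w_i - 1} b_{i,j,*} = p\, p_\text{cl}^{\,i} \cdot \frac{w_i + 1 - 2 p_\text{cl}}{2(1 - p_\text{cl})},
\end{equation*}
and summing over $i \geq 0$ with $w_i = 2^i w_0$ splits into two geometric series of ratios $2 p_\text{cl}$ and $p_\text{cl}$ respectively, which collapse by routine algebra to $\frac{p\bigl(4 p_\text{cl}^2 - (w_0 + 4) p_\text{cl} + w_0 + 1\bigr)}{2(1 - p_\text{cl})^2 (1 - 2 p_\text{cl})}$, delivering the claimed expressions for both $b_\text{idle}$ and $p_\text{idle}$.

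The main obstacle will be the first step, namely the careful bookkeeping that cancels the $p$-factors when passing from the three-dimensional equations to the two-dimensional marginal chain: one has to match the freeze, decrement, stage-advance, and reset branches of Figs.~\ref{Markov-a} and \ref{Markov-b} against one another while simultaneously accounting for the idle-state inflow/outflow that only touches the $i=0$ layer. A minor side condition worth flagging is that convergence of $\sum_i (2 p_\text{cl})^i$ imposes $p_\text{cl} < 1/2$, which tacitly delimits the stability regime in which these closed-form expressions are meaningful.
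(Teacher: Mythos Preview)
Your proposal is correct and reaches all the claimed formulas, but the route differs from the paper's. You marginalize over the buffer coordinate and then solve the two-dimensional balance equations directly: a geometric recursion in the stage index, a first-order difference in the counter, and a throughput-balance identity $(1-p_\text{cl})\sum_{i\ge 0}b_{i,0,*}=p$ to pin down $b_{0,0,*}$. The paper instead argues each piece probabilistically: it asserts $b_{0,0,*}=p$ at the outset by noting that every arriving packet passes through $(0,0,k)$ for exactly one slot; it obtains $b_{i,0,*}=p_\text{cl}\,b_{i-1,0,*}$ via a flow-cut (dynamic equilibrium) between the stages $i'\le i$ and $i''>i$; and it derives $b_{i,j,*}$ not from a difference equation but from a sojourn-time argument, writing $b_{i,j,*}=\bar{L}\cdot\Pr\{J\ge j\}\cdot b_{i,0,*}$ with $\bar{L}=1/(1-p_\text{cl})$ the expected holding time under freezes and $\Pr\{J\ge j\}=(w_i-j)/w_i$. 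Your approach has the virtue of being fully mechanical once the marginal chain is written down, and your throughput-balance pinning of $b_{0,0,*}$ is more transparent than the paper's one-line heuristic; the paper's sojourn-time argument, on the other hand, sidesteps the bookkeeping you flag as the main obstacle and gives a cleaner renewal-type interpretation of the $(w_i-j)/[w_i(1-p_\text{cl})]$ factor. Your remark that $p_\text{cl}<1/2$ is required for the series $\sum_i(2p_\text{cl})^i$ to converge is a useful side condition that the paper leaves implicit.
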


\begin{proof}
See Appendix \ref{proof-stationary-distribution-of-the-markovchain}.
\end{proof}

According to Proposition \ref{proposition-the-stationary-distribution-of-the-Markovchain}, the transmission probability ${{p}_\text{tx}}$ and collision probability ${{p}_\text {cl}}$ are presented by the below theorem.
\begin{theorem}\label{theorem-transmission-and-collision-probability}
Given the packet rate $p$ and stationary distribution $\boldsymbol{b}$ of the Markov chain, ${{p}_\text{tx}}$ and ${{p}_\text {cl}}$ are given by
\begin{align}\label{eq:ptx-probability}
{{p}_\text {tx}}&=\frac{p}{1-{{p}_\text {cl}}}
\end{align}
\begin{align}\label{eq:pcl-probability}
{{p}_\text {cl}}&=1-{{(1-{{p}_\text {tx}})}^{N-1}},
\end{align}
in which $N$ is the number of nodes.
\end{theorem}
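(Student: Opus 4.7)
The plan is to obtain \eqref{eq:ptx-probability} as a direct consequence of Proposition \ref{proposition-the-stationary-distribution-of-the-Markovchain} and to obtain \eqref{eq:pcl-probability} from a standard independence (decoupling) argument across the $N$ nodes, analogous to Bianchi's treatment \cite{G.Bianchi2000} but adapted here to the unsaturated setting.

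First, I would argue that by the definition of $p_\text{tx}$, a node attempts a transmission in a given slot exactly when its back-off counter has reached zero and its buffer is non-empty, so $p_\text{tx}$ equals the aggregate stationary mass on the set of states $\{(i,0,k):i\ge 0,\,k\ge 1\}$. Using the notation of Proposition \ref{proposition-the-stationary-distribution-of-the-Markovchain}, this is $p_\text{tx}=\sum_{i=0}^{\infty} b_{i,0,*}$. Substituting $b_{0,0,*}=p$ and $b_{i,0,*}=p\,p_\text{cl}^{i}$ for $i\ge 1$ produces a geometric series
\begin{equation*}
p_\text{tx}=p+\sum_{i=1}^{\infty} p\,p_\text{cl}^{i}=p\sum_{i=0}^{\infty} p_\text{cl}^{i}=\frac{p}{1-p_\text{cl}},
\end{equation*}
which is exactly \eqref{eq:ptx-probability}. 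This step is routine; the only thing to check is the convergence condition $p_\text{cl}<1$, which is automatic in a non-degenerate network.

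For \eqref{eq:pcl-probability}, I would invoke the symmetry of the $N$ nodes and the standard decoupling assumption that, in steady state, the per-slot transmission attempts of different nodes are mutually independent, each happening with the same probability $p_\text{tx}$. A given node experiences a collision in a slot in which it attempts a transmission if and only if at least one of the remaining $N-1$ nodes also attempts a transmission in the same slot. Since each of those $N-1$ nodes independently refrains from transmitting with probability $1-p_\text{tx}$, the probability that none of them transmits is $(1-p_\text{tx})^{N-1}$, and hence
\begin{equation*}
p_\text{cl}=1-(1-p_\text{tx})^{N-1},
\end{equation*}
establishing \eqref{eq:pcl-probability}.

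The principal subtlety, and thus the main obstacle worth flagging, is not the algebra but the justification of the independence/decoupling assumption underlying \eqref{eq:pcl-probability}: strictly speaking the per-node Markov chains are coupled through $p_\text{cl}$, and the derivation treats $p_\text{cl}$ self-consistently as a fixed point of \eqref{eq:ptx-probability}–\eqref{eq:pcl-probability}. I would therefore note explicitly that \eqref{eq:pcl-probability} is a mean-field/steady-state identity in the spirit of \cite{G.Bianchi2000}, and that together with \eqref{eq:ptx-probability} it forms a two-equation system in $(p_\text{tx},p_\text{cl})$ whose solution pins down both quantities in terms of the design parameters $p$ and $N$.
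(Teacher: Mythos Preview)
Your proposal is correct and mirrors the paper's own proof: both compute $p_\text{tx}$ as $\sum_{i\ge 0} b_{i,0,*}$ via Proposition~\ref{proposition-the-stationary-distribution-of-the-Markovchain} and then obtain $p_\text{cl}$ from the standard ``at least one of the other $N-1$ nodes transmits'' argument. Your added remarks on convergence and the mean-field/decoupling caveat go slightly beyond what the paper states explicitly, but the substance of the argument is identical.
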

\begin{proof}

See Appendix \ref{proof-transmission-and-collision-probability}.
\end{proof}
By combining \eqref{eq:ptx-probability} and \eqref{eq:pcl-probability}, ${{p}_\text {tx}}$ can be obtained with the Newton's iteration method. After that, ${{p}_\text{cl}}$ (cf.\eqref{eq:pcl-probability}) and ${{p}_\text{idle}}$ (cf.\eqref{eq:csma-bidle}) can also be obtained.

Note that the probability for a node to deliver a packet successfully in each slot is expressed as
\begin{align}
   {{p}_\text{s}}={{p}_\text{tx}}\left( 1-{{p}_\text {cl}} \right)=p.
 \end{align}
Therefore, the service rate can be acquired
\begin{align}\label{eq:service-rate-mu}
   {\mu } {=}\frac{p}{1-{{p}_\text{idle}}}.
\end{align}

We represent the time that a node delivers a packet successfully as the service time ${{S}_{k}}$ and model ${{S}_{k}}$ as a geometrically distributed random variable with a mean of $1/\mu$. Thus, we have
\begin{align}\label{eq:service-time-distributed}
\Pr \left\{ {{S}_{k}}=j \right\}={\mu } {{\left( 1-{\mu }  \right)}^{j-1}}, j=1,2,\cdots.
\end{align}
The expression \eqref{eq:service-time-distributed} can be explained as follows. In each slot of the transmission, the service would be completed with probability $\mu$ at the end of that slot and would be continued in the next slot with probability $1-\mu$.
\begin{remark}
\begin{itemize}
To obtain the maximum packet rate $p_\text{max}$ and number $N_\text{max}$ of nodes of network, we set $p\to \mu$, thus we approximately have ${{p}_{\text{idle}}}=0$ from \eqref{eq:service-rate-mu}.
  \item For a fixed $N$, according to $p_\text{idle}$ (cf. \eqref{eq:csma-bidle}) and by substituting \eqref{eq:pcl-probability} into \eqref{eq:csma-bidle}, we have
  \begin{align}\label{eq:ptx-equation}
     \frac{{{p}_{\text{tx}}}\left( 4p_{\text{cl}}^{\text{2}}-\left( {{w}_{0}}+4 \right){{p}_{\text{cl}}}+{{w}_{0}}+1 \right)}{2{{(1-{{p}_{\text{tx}}})}^{N-1}}\left( 1-2{{p}_{\text{cl}}} \right)}=1,
  \end{align}
  where ${{p}_{\text{cl}}}=1-{{\left( 1-{{p}_{\text{tx}}} \right)}^{N-1}}$. Specifically, we can get the real solution $\overline{{{p}_{\text{tx}}}}$ of $p_\text{tx}$ from \eqref{eq:ptx-equation}. By combining \eqref{eq:pcl-probability} to \eqref{eq:ptx-probability}, and replacing ${p}_{\text{tx}}$ with $\overline{{{p}_{\text{tx}}}}$, we have
  \begin{align}\label{eq:p-max}
     {p}_{\text{max}}=\overline{{{p}_{\text{tx}}}}{{\left( 1-\overline{{{p}_{\text{tx}}}} \right)}^{N-1}}.
  \end{align}
  \item Likewise, given $p$ and based on ${{p}_{\text{idle}}}=0$, we can obtain the real solution $\overline{{{p}_{\text{cl}}}}$ of $p_\text{cl}$ (cf. \eqref{eq:csma-bidle}). Therefore, $N_\text{max}$ can be expressed as
  \begin{align}
  {N}_\text{max}=\lfloor \frac{\ln \left( 1-\overline{{{p}_{\text{cl}}}} \right)}{\ln \left( 1-{{p}_{\text{tx}}} \right)}+1 \rfloor,
  \end{align}
  where $\lfloor x \rfloor$ denotes the largest integer not exceeding $x$.
\end{itemize}
\end{remark}
\subsection{Average AoI of the Network}
For each node, the packets generated are a Bernoulli process with parameter $p$ and service time ${S}_{k}$ is the geometric distribution (cf.\eqref{eq:service-time-distributed}), therefore, the queueing process at each node would be a Geom/Geom/1 queue. Under this model, the probability generating function (PGF) of the system time of a packet is presented by the below lemma.

\begin{lemma}\label{PGF-of-system-time-proposition}
Given $p<\mu$, the PGF of the system time is given by
\begin{align}\label{eq:PGF-of-the-system-time-expression}
   {{G}_{T}}\left( z \right)=\frac{\beta z}{1-\left( 1-\beta  \right)z}, \quad \beta =\frac{\mu -p}{1-p}.
\end{align}
\end{lemma}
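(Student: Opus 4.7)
The plan is to treat each node's buffer as a discrete-time Geom/Geom/1 system (Bernoulli($p$) arrivals with Geom($\mu$) service times) and derive $G_T(z)$ by combining the arriving packet's view of the queue with the memoryless property of geometric service. The stability assumption $p<\mu$ is exactly what allows the geometric tails below to be summable.

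First, I would set up the birth--death chain for the number $L_m$ of packets in the system at the start of slot $m$. Using the late-arrival / immediate-service convention, the level-crossing balance equations are
\begin{align*}
\pi_0\, p &= \pi_1\, \mu(1-p),\\
\pi_k\, p(1-\mu) &= \pi_{k+1}\, \mu(1-p),\qquad k\ge 1,
\end{align*}
so $\pi_k$ has a geometric tail with ratio $r = p(1-\mu)/[\mu(1-p)]<1$, and normalization gives $\pi_0 = (\mu-p)/\mu$.

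Next, I would compute the number $L$ of packets seen by a tagged arriving packet. Because Bernoulli arrivals are independent of the in-slot service-completion indicator, a short calculation subtracting that indicator from the stationary count yields $P(L=k) = (1-r)\,r^{k}$ for $k\ge 0$; i.e., $L$ is \emph{geometric on $\{0,1,\ldots\}$} with success parameter $\alpha := 1-r = (\mu-p)/[\mu(1-p)]$. By the memoryless property of the geometric service distribution, the residual service time of the packet already in service (if any) is again Geom($\mu$), so the tagged packet's sojourn time is a geometric sum of i.i.d.\ Geom($\mu$) service times,
\[
T = \sum_{i=1}^{L+1} S_i.
\]

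Finally, I would invoke the standard identity that a Geom($\alpha$)-sum (on $\{1,2,\ldots\}$) of i.i.d.\ Geom($\gamma$) random variables is itself Geom($\alpha\gamma$); this drops out in one line of PGF composition $G_T(z) = G_S(z)\,G_L(G_S(z))$, in which the $(1-(1-\mu)z)$ factors cancel. With $\alpha\gamma = \mu\cdot(\mu-p)/[\mu(1-p)] = (\mu-p)/(1-p) = \beta$, this yields $G_T(z) = \beta z/[1-(1-\beta)z]$ as claimed. The delicate step is the second one: naively applying BASTA to the time-stationary distribution $\pi_k$ (instead of the arriving-packet distribution, which differs by precisely that in-slot completion indicator) would inject an extra factor and destroy the cancellation in the PGF composition, so the arrival-epoch convention has to be handled carefully.
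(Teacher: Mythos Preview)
Your derivation is correct and in fact goes well beyond what the paper does: the paper's ``proof'' of this lemma is a one-line citation to \cite{N.TianandX.Xu2008} and contains no argument of its own. So there is nothing to compare at the level of technique; you supply a self-contained proof where the paper simply points to a textbook.

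On the substance, your route---stationary birth--death chain for the Geom/Geom/1 queue, then the arriving-customer queue-length distribution, then the memoryless decomposition $T=\sum_{i=1}^{L+1}S_i$ and the ``geometric sum of i.i.d.\ geometrics is geometric'' identity---is exactly the standard derivation one finds in discrete-time queueing texts, and your algebra checks: $1-r=(\mu-p)/[\mu(1-p)]=\alpha$ and $\alpha\mu=(\mu-p)/(1-p)=\beta$, so the composed PGF collapses to $\beta z/[1-(1-\beta)z]$. Your caveat about BASTA versus the arriving-packet view is the only genuinely delicate point, and you flag it appropriately; just make sure that when you write it up you fix one timing convention (e.g.\ late-arrival with immediate access) at the outset and verify explicitly that the post-departure/pre-arrival queue length seen by the tagged packet is the pure geometric $(1-r)r^k$ rather than the boundary distribution $\pi_k$, since $\pi_1/\pi_0\neq r$ and mixing the two would indeed spoil the cancellation.
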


\begin{proof}
A detailed proof can be found in \cite[Chap. 3.1.2]{N.TianandX.Xu2008}.
\end{proof}

Further, the average AoI is presented by the below theorem.
\begin{theorem}\label{AOI-expression-theorem}
Given the packet rate $p$, the average AoI of the network can be represented as
\begin{align}
   & \bar{\Delta}=\frac{1}{p}+\frac{p}{\mu }+\frac{1-p}{\mu -p}-\frac{p}{{{\mu }^{2}}}.
\end{align}
\end{theorem}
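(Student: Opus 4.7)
Because each node's queue is identified as a Geom/Geom/1 FCFS queue (Bernoulli arrivals at rate $p$; geometric service at rate $\mu$ from \eqref{eq:service-time-distributed}), the plan is to apply the standard sample-path ``sawtooth'' argument to this queue. Referring to Fig.~\ref{age-sample}, I would write the area of the $k$-th trapezoidal tooth (between the two consecutive successful receptions at slots $m_{k-1}'$ and $m_k'$) as
\[
A_k = T_{k-1} Y_k + \tfrac{1}{2} Y_k (Y_k - 1), \qquad Y_k := m_k' - m_{k-1}',
\]
which is just the sum $\sum_{j=0}^{Y_k-1}(T_{k-1}+j)$ of the integer-valued age across the interval. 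Time-averaging and using $\mathbb{E}[Y]=\mathbb{E}[X]=1/p$ from stability then gives
\[
\bar{\Delta} \;=\; p\,\mathbb{E}[T_{-}Y] \;+\; \tfrac{p}{2}\bigl(\mathbb{E}[Y^{2}]-\mathbb{E}[Y]\bigr).
\]

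Next, using $m_k' - m_{k-1}' = (m_k + T_k) - (m_{k-1} + T_{k-1})$, I expand the identity $Y_k = X_k + T_k - T_{k-1}$, which converts the right-hand side into moments of $X$, $T$, $S$ together with the single cross-moment $\mathbb{E}[T\,T_{-}]$. Lemma~\ref{PGF-of-system-time-proposition} supplies the marginals of $T$ via $G_T(z)$: in particular, $T$ is geometric with parameter $\beta=(\mu-p)/(1-p)$, so $\mathbb{E}[T]=(1-p)/(\mu-p)$ and $\mathbb{E}[T^2]=(2-\beta)/\beta^{2}$. The geometric distributions of $X$ (with mean $1/p$) and $S$ (with mean $1/\mu$) give their first two moments directly.

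To get $\mathbb{E}[T\,T_{-}]$ I would invoke Lindley's recursion $T_{k} = (T_{k-1}-X_k)^{+} + S_k$, conditioning on whether $X_k\le T_{k-1}$; the independences $S_k\perp(X_k,T_{k-1})$ and $X_k\perp T_{k-1}$ (which the Bernoulli and FCFS hypotheses supply) reduce the calculation to a finite number of geometric tail-sums like $\sum_n n(1-p)^{n}$ and $\sum_n n^{2}(1-p)^{n}$, all of which have closed forms. The memoryless property of the geometric $T_{-}$ collapses these further. With $\mathbb{E}[T\,T_{-}]$ in hand, $\mathbb{E}[Y^{2}]$ and $\mathbb{E}[T_{-}Y]$ follow by the same expansion.

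The final step is algebraic: substitute all moments back into $\bar{\Delta}=p\mathbb{E}[T_{-}Y]+\tfrac{p}{2}(\mathbb{E}[Y^{2}]-\mathbb{E}[Y])$ and simplify. Many terms should cancel, leaving the four-term expression in the theorem, where the summands admit the clean interpretations $1/p=\mathbb{E}[X]$, $(1-p)/(\mu-p)=\mathbb{E}[T]$, and $p/\mu-p/\mu^{2}$ as the discrete correction capturing the correlation between successive system times. I expect the main obstacle to be precisely this Lindley step: carrying out $\mathbb{E}[T\,T_{-}]$ and $\mathbb{E}[XT]$ without arithmetic slips, and verifying that the many partial sums collapse into the exact four-term form stated, rather than something that merely agrees numerically. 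Everything else is routine moment bookkeeping on geometric laws.
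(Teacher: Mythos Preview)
Your approach is sound and will produce the stated formula, but it is a genuinely different route from the paper's. The paper does \emph{not} work with the inter-departure time $Y_k$ at all. Instead it uses the ``difference of triangles'' decomposition of the sawtooth directly in terms of the inter-arrival time $X_k$ and the system time $T_k$, obtaining
\[
\mathbb{E}[A_k]=\tfrac{1}{2}\mathbb{E}[X_k^2]+\tfrac{1}{2}\mathbb{E}[X_k]+\mathbb{E}[X_kT_k],
\]
so that the $\mathbb{E}[T^{2}]$ terms cancel in expectation from the outset and the only nontrivial cross-moment is $\mathbb{E}[X_kT_k]=\mathbb{E}[X_kW_k]+\mathbb{E}[X_k]\mathbb{E}[S_k]$. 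For $\mathbb{E}[X_kW_k]$ the paper introduces a single auxiliary generating function
\[
H(z)=\sum_{j\ge 1} z^{j}\Pr\{X_k=j\}\sum_{i\ge 0}\Pr\{T_{k-1}>j+i\},
\]
evaluates it in closed form using the geometric tail $\Pr\{T_{k-1}>i\}=(1-\beta)^{i}$ supplied by Lemma~\ref{PGF-of-system-time-proposition}, and reads off $\mathbb{E}[X_kW_k]=H'(1^-)$. No Lindley recursion and no $\mathbb{E}[T_kT_{k-1}]$ ever appear.

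Your route, by contrast, carries $Y_k=X_k+T_k-T_{k-1}$ through the area expression, which forces you to compute both $\mathbb{E}[T_{k-1}T_k]$ and $\mathbb{E}[X_kT_k]$ via Lindley and several geometric tail sums before everything collapses. This is correct but strictly more work: once you expand, the $\mathbb{E}[T_{k-1}T_k]$ and $\mathbb{E}[T^2]$ contributions cancel against each other, and what survives is exactly the $\mathbb{E}[X_kT_k]$ term the paper computes in one stroke with $H(z)$. The trade-off is that your method is more mechanical (pure moment bookkeeping), while the paper's generating-function shortcut is shorter but requires spotting that $W_k=(T_{k-1}-X_k)^+$ lets you write $\mathbb{E}[X_kW_k]$ as a derivative of a rational function in $z$.
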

\begin{proof}
See Appendix \ref{proof-AOI-expression-theorem}.
\end{proof}

By Theorem \ref{AOI-expression-theorem}, the average AoI of the CSMA/CA based network can be acquired, where the service rate $\mu$ is given by \eqref{eq:service-rate-mu}.

\section{Numerical Results} \label{organize-simulation-results}
In this section, we research the average AoI through the numerical results. We set $w_{0}=8$.
\begin{figure}[!htp] %h表示这儿；t表示顶部；b表示底部；p表示本页
	\centering %居中
	\includegraphics[width=3.5in]{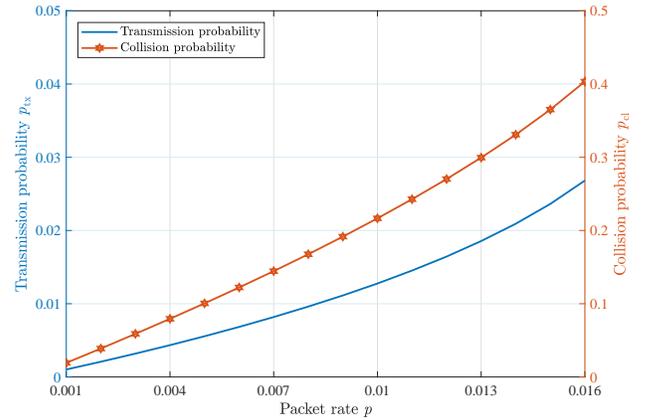}%图片大小及加载的图片名称		
	\caption{$p_\text{tx}$ and $p_\text{cl}$ versus $p$ ($N=20$)}%图片标题
	\label{p-change}%标注该图片，用于在文章内引用
\end{figure}

\begin{figure}[!htp] %h表示这儿；t表示顶部；b表示底部；p表示本页
	\centering %居中
	\includegraphics[width=3.5in]{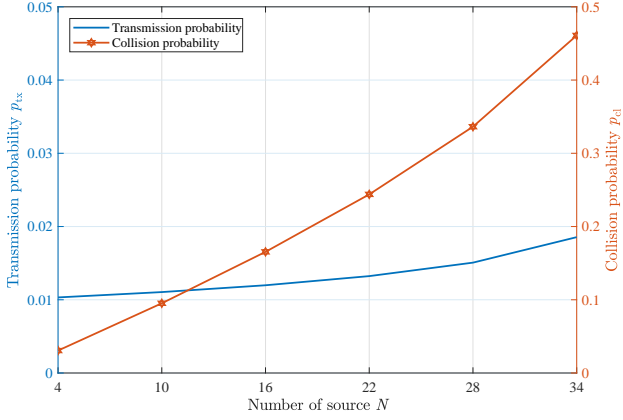}%图片大小及加载的图片名称		
	\caption{$p_\text{tx}$ and $p_\text{cl}$ versus $N$ ($p=0.01$)}%图片标题
	\label{node-change}%标注该图片，用于在文章内引用
\end{figure}
As shown in Figs. \ref {p-change} and \ref{node-change}, we present both the transmission probability $p_\text{tx}$ and collision probability $p_\text{cl}$ as functions of packet rate $p$ and the number $N$ of nodes. First, we observe that $p_\text{tx}$ increases with both $p$ and $N$. This is because $p$ increasing leads to a larger probability of non-empty buffer, which further leads to a larger $p_\text{tx}$. Likewise, an increase in $N$ results in more attempts of transmission and thus a larger $p_\text{tx}$. Moreover, we see also that $p_\text{cl}$ increases with both $p$ and $N$. This is because as $p$ and $N$ increase, there will be more competitions in the transmissions.
\begin{figure}[!htp] %h表示这儿；t表示顶部；b表示底部；p表示本页
	\centering %居中
	\includegraphics[width=3.5in]{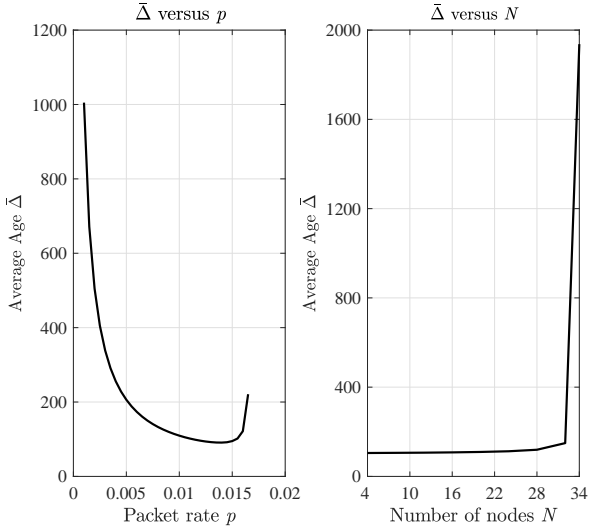}%图片大小及加载的图片名称		
	\caption{Change in the average AoI}%图片标题
	\label{AoI-change}%标注该图片，用于在文章内引用
\end{figure}

In Fig. \ref{AoI-change}, we show how the packet rate $p$ and number $N$ of nodes affect the average AoI. First, we see that the average AoI is decreasing first and then increasing with $p$ increasing. In particular, when $p$ is relatively small or large, the average AoIs are quite large. The main reason for this result is that the new updates waiting times are longer when $p$ is small, as well as when $p$ is large the frequency of collisions becomes higher, thus the service is longer. Second, we also see that as $N$ increases, the average AoI is increasing. This is a consequence of the collision frequency and service time becoming larger with $N$ growth.
\section{Conclusion} \label{organzie-conclusion}
In this paper, the timeliness of wireless network based on CSMA/CA has been investigated. Particularly, we are interested in how the average AoIs vary with the packet rate and the number of nodes. We showed that the average AoI would be larger when a packet rate is relatively small or large and average AoI increases with the number of nodes. With respect to this study, we considered that it takes exactly one slot for the transmission of each packet. The case where more transmission slots are required each packet would be investigated in future work. Moreover, we shall further study the mixed use of fixed and random multiple access technologies.

\appendix
\subsection{Proof of Proposition \ref{proposition-the-stationary-distribution-of-the-Markovchain}}\label{proof-stationary-distribution-of-the-markovchain}
\begin{proof}
First, given $p<\mu$, in the zeroth back-off stage, we know that the back-off counter will definitely reaches zero eventually and stay at state $(0, 0, k)$ for some $k\ge 1$ for exactly one slot. Thus, we have ${{b}_{0,0,*}}=p$.

Second, if a collision occur, a state will transfer from the $(i,0,*)$ to the $(i+1)$-th back-off stage. At this time, the state returns and remains at $(i+1,0)$ for exactly one slot, and thus we have
\begin{align}\label{eq:back-off-stage-from-i-i+1}
   {{b}_{i,0,*}}{{p}_\text {cl}}={{b}_{i+1,0,*}}, \quad i\ge 0.
\end{align}
Further, we have
\begin{align}\label{eq:stationary-distribution-b-i,0,*}
   {{b}_{i,0,*}}=p_\text {cl}^{i}{{b}_{0,0,*}}=pp_\text {cl}^{i}, \quad i\ge 0.
\end{align}
It is easy to prove \eqref{eq:stationary-distribution-b-i,0,*} by the dynamic equilibrium property of stationary Markov chains. To this end, we shall divide the state space into the states $(i',j,k)$ with $i'\leq i$ and the states $(i'',j,k)$ with $i'' \ge i$. Since the total transition probabilities of the left-to-right part and the right-to-left part are equal, thus we have
\begin{align}\label{eq:the-sum-of-the-transition-probabilities-of-the-right-part}
   {{b}_{i,0,*}}{{p}_\text {cl}}&=\sum\limits_{i''=i+1}^{\infty }{{{b}_{i'',0,*}}(1-{{p}_\text {cl}})}\nonumber \\
                                &={{b}_{i+1,0,*}}(1-{{p}_\text {cl}})+\sum\limits_{i''=i+2}^{\infty }{{{b}_{i'',0,*}}(1-{{p}_\text {cl}})},
\end{align}
which yields
\begin{align}\label{eq:the-sum-of-the-transition-probabilities-of-the-right-part-from-i+2}
   {{b}_{i+1,0,*}}{{p}_\text {cl}}=\sum\limits_{i''=i+2}^{\infty }{{{b}_{i'',0,*}}(1-{{p}_\text {cl}})}.
\end{align}
By substituting \eqref{eq:the-sum-of-the-transition-probabilities-of-the-right-part-from-i+2} into \eqref{eq:the-sum-of-the-transition-probabilities-of-the-right-part}, \eqref{eq:back-off-stage-from-i-i+1} can be obtained.
We denote the initial back-off counter of the i-th back-off stage as $J$ and have
\begin{align}
   \Pr \left\{ J\ge j \right\}=\frac{{{w}_{i}}-j}{{{w}_{i}}}, 0\le j\le {{w}_{i}}-1.
\end{align}
Note that a node has a chance to stay in state $(i,j,*)$ as long as the initial back-off counter $J$ is no less than $j$. When the back-off counter has been reduced to $j$, the expected sojourn time for this state could be denoted as
\begin{align}
   \bar{L}=\sum\nolimits_{l=1}^{\infty }{l\left( 1-{{p}_{\text{cl}}} \right)}p_{\text{cl}}^{l-1}=\frac{1}{1-{{p}_{\text{cl}}}}.
\end{align}
Afterwards, the state eventually transits to $(i,0,*)$ and stay for only one slot. Thus, the stationary probability that a node is in state $(i,j,*)$ is represented by
\begin{align}\label{back-off-stationary-distributed}
   {{b}_{i,j,*}}=\bar{L}\Pr \left\{ J\ge j \right\}{{b}_{i,0,*}}=\frac{p\left( {{w}_{i}}-j \right)p_\text {cl}^{i}}{{{w}_{i}}\left( 1-{{p}_\text {cl}} \right)},
\end{align}
in which $i\ge 0$, $0\le j\le {{w}_{i}}-1$, and \eqref{eq:stationary-distribution-b-i,0,*} is used. Moreover, \eqref{back-off-stationary-distributed} can also be proved by applying the dynamic equilibrium equation to the Markov chain.

Finally, by the normalized condition of the stationary distribution, we have
\begin{align}
   {{b}_\text{idle}}
   &={{b}_{-1,-1,0}}=1-\sum\limits_{i=0}^{\infty }{{{b}_{i,0,*}}-\sum\limits_{i=0}^{\infty }{\sum\limits_{j=1}^{{{w}_{i}}-1}{{{b}_{i,j,*}}}}}\nonumber \\
   &=1-\frac{p\left( 4p_\text {cl}^{2}-\left( {{w}_{0}}+4 \right){{p}_\text {cl}}+{{w}_{0}}+1 \right)}{2{{\left( 1-{{p}_\text {cl}} \right)}^{2}}\left( 1-2{{p}_\text {cl}} \right)}.
\end{align}

This finishes the proof of Proposition \ref{proposition-the-stationary-distribution-of-the-Markovchain}.
\end{proof}
\subsection{Proof of Theorem \ref{theorem-transmission-and-collision-probability}}\label{proof-transmission-and-collision-probability}
\begin{proof}
Based on Proposition \ref{proposition-the-stationary-distribution-of-the-Markovchain} and the fact that a node attempts to deliver a packet every time when its back-off counter is zero. Thus, the transmission probability is the total probability that a node is in a state $(i,0,k)$, for $i\geq0$,$k\geq1$ and we have
\begin{align}\label{eq:proof-csma-ptx}
    {{p}_\text{tx}}=\sum\limits_{i=0}^{\infty }{{{b}_{i,0,*}}}=\sum\limits_{i=0}^{\infty }{p_\text {cl}^{i}}{{b}_{0,0,*}}=\frac{p}{1-{{p}_\text {cl}}}.
\end{align}
The collision probability is the conditional probability that a reference node transmits it packets while one or more of the remaining $N-1$ nodes send their packets at the same time. Thus, ${p}_\text {cl}$ is denoted as
\begin{align}
   {{p}_\text {cl}}=1-{{\left( 1-{{p}_\text {tx}} \right)}^{N-1}}.
\end{align}

This finishes the proof of Theorem \ref{theorem-transmission-and-collision-probability}.
\end{proof}

\subsection{Proof of Theorem \ref{AOI-expression-theorem}}\label{proof-AOI-expression-theorem}
\begin{proof}
Consider a period of $M$ slots, in which the monitor receive successfully $K$ packets. By dividing the area of Fig. \ref{age-sample} into a series of triangular-like $A_{0}, A_{1}, \cdots$, thus the average AoI is denoted as
\begin{align}\label{eq:average-AoI-expression}
   \bar{\Delta} &=\underset{M\to \infty }{\mathop{\lim }}\,\frac{k-1}{M}\frac{1}{k-1}\sum\limits_{k=1}^{K}{{{A}_{k}}} \nonumber \\
                &=p\mathbb{E}\left[ {{A}_{k}} \right]
\end{align}
where $p=\underset{M\to \infty }{\mathop{\lim }}\,{K}/{M}\;$ is the packet rate. Moreover,
\begin{align}\label{eq:the-average-area}
   \mathbb{E}\left[ {{A}_{k}} \right]
   & =\frac{1}{2}\mathbb{E}\left[ X_{k}^{2} \right]+\frac{1}{2}\mathbb{E}\left[ {{X}_{k}} \right]+\mathbb{E}\left[ {{X}_{k}}{{T}_{k}} \right].
\end{align}

Since the packets generated are a Bernoulli process with rate $p$ and inter-arrival time ${X}_{k}$ would be a geometrically distributed random variable with a mean of $1/p$. Thus, we can get
\begin{align}\label{eq:the-first-moment}
   \mathbb{E}\left[ {{X}_{k}} \right]=\frac{1}{p}, \quad \mathbb{E}\left[ X_{k}^{2} \right]=\frac{2-p}{{{p}^{2}}}.
\end{align}

To derive $\mathbb{E}\left[ {{X}_{k}}{{T}_{k}} \right]= \mathbb{E}\left[ {{X}_{k}}{{W}_{k}} \right]+\mathbb{E}\left[ {{X}_{k}} \right]\mathbb{E}\left[ {{S}_{k}} \right]$, we introduce the auxiliary function $H\left( z \right)$ as follows(cf.\cite{Y.Dong2019})
\begin{align}\label{eq:define-auxiliary-function }
   H\left( z \right)
   & =\sum\limits_{j=1}^{\infty }{{{z}^{j}}}\Pr \left\{ {{X}_{k}}=j \right\}\sum\limits_{i=0}^{\infty }{\Pr \left\{ {{T}_{k-1}}>j+i \right\}} \nonumber \\
   & =\frac{\left( 1-\beta  \right)pz}{\beta \left( 1-\left( 1-\beta  \right)\left( 1-p \right)z \right)},
\end{align}
where $\Pr \left\{ {{T}_{k-1}}>i \right\}={{\left( 1-\beta  \right)}^{i}}$.
Moreover,
\begin{align}\label{eq:calculate-X_k-W_k}
    \mathbb{E}\left[ {{X}_{k}}{{W}_{k}} \right]
       & =\sum\limits_{j=1}^{\infty }{j\Pr \left\{ {{X}_{k}}=j \right\}}\sum\limits_{i=0}^{\infty }{\Pr \left\{ \max \left( 0,{{T}_{k-1}}-j \right)>i \right\}} \nonumber \\
    & =\sum\limits_{j=1}^{\infty }{j\Pr \left\{ {{X}_{k}}=j \right\}}\sum\limits_{i=0}^{\infty }{\Pr \left\{ {{T}_{k-1}}>j+i \right\}} \nonumber \\
    & =\underset{z\to {{1}^{-}}}{\mathop{\lim }}\,{{\left( H\left( z \right) \right)}^{'}} 
     =\frac{p\left( 1-\beta  \right)}{\beta {{\left( 1-\left( 1-p \right)\left( 1-\beta  \right) \right)}^{2}}}.
\end{align}

According to \eqref{eq:average-AoI-expression} to \eqref{eq:calculate-X_k-W_k}, the average AoI given in (18) can be obtained. 
Thus, the proof of Theorem \ref{AOI-expression-theorem} has been finished.
\end{proof}

\ifCLASSOPTIONcaptionsoff
  \newpage
\fi

\end{document}